\newif\iffinal
\newtheorem{theorem}{Theorem}
\def\bbA{\mathbb{A}}
\def\calF{\mathcal{F}}
\def\calL{\mathcal{L}}
\def\calO{\mathcal{O}}
\def\calP{\mathcal{P}}
\def\calR{\mathcal{R}}
\def\calT{\mathcal{T}}
\def\hash{\mathsf{Hash}}
\def\id{\mathsf{ID}}
\def\B{\mathsf{B}}
\def\symenc{\mathsf{Sym}\mbox{-}\mathsf{Enc}}
\def\symdec{\mathsf{Sym}\mbox{-}\mathsf{Dec}}
\def\asymenc{\mathsf{Asym}\mbox{-}\mathsf{Enc}}
\def\asymdec{\mathsf{Asym}\mbox{-}\mathsf{Dec}}
\def\cpabeenc{\mathsf{CPABE}\mbox{-}\mathsf{Enc}}
\def\cpabedec{\mathsf{CPABE}\mbox{-}\mathsf{Dec}}
\def\sign{\mathsf{Sign}}
\def\ecrecover{\mathsf{ecrecover}}
\def\para#1{\vskip .3ex\noindent\textit{#1}}
\def\maxtran{\mathsf{MAX}\_\mathsf{TRAN}}
\def\staffMember{\mathsf{staffMember}}
\def\addStaffMember{\mathsf{addStaffMember}}
\def\getAttributes{\mathsf{getAttributes}}
\def\verifyRequest{\mathsf{verifyRequest}}
\def\ecrecover{\mathsf{ecrecover}}
\def\LogAnnounce{\mathsf{LogAnnounce}}
\def\satisfy{\mathsf{satisfy}}
\def\SMR{\mathsf{SMR}}
\def\AVPA{\mathsf{AVPA}}
\def\GK{\mathsf{GK}}
\def\addkey{\mathsf{addKey}}
\def\LogKeys{\mathsf{LogKeys}}
\def\staffID{\mathsf{staffID}}
\def\staffAttributes{\mathsf{staffAttributes}}
\def\upBound{\mathsf{upBound}}
\def\Map{\mathsf{Map}}
\def\address{\mathsf{\_address}}
\def\id{\mathsf{\_id}}
\def\attributes{\mathsf{\_attributes}}
\def\setOfCertifiers{\mathsf{setOfCertifiers}}
\def\msgsender{\mathsf{msg.sender}}
\def\staffAddress{\mathsf{\_staffAddress}}
\def\msg{\mathsf{\_msg}}
\def\signer{\mathsf{signer}}
\def\pub{\mathsf{pub}}
\def\pr{\mathsf{pr}}
\def\fetched{\mathsf{fetched}}
\def\ipfsAddress{\mathsf{\_ipfsAddress}}
\def\setOfIssuers{\mathsf{setOfIssuers}}
\def\PK{\mathsf{PK}}
\def\MK{\mathsf{MK}}
\def\CT{\mathsf{CT}}
\def\SK{\mathsf{SK}}
\def\F{\mathsf{F}}
\def\sk{\mathsf{sk}}
\def\R{\mathsf{R}}
\def\A{\mathsf{A}}
\def\ER{\mathsf{ER}}
\def\Esk{\mathsf{Esk}}
\def\ESK{\mathsf{ESK}}
\definecolor{revision}{rgb}{1,0,0}
\def\add#1{\textcolor{blue}{#1}}
\def\para#1{\smallskip\noindent\textbf{#1}:~~}
\begin{document}
\title{$d$-MABE: Distributed Multilevel Attribute-Based EMR Management and Applications}
\author{Ehab Zaghloul, Tongtong Li, Matt Mutka and Jian Ren%
\thanks{The authors are with Michigan State University, East Lansing, MI 48824-1226, Email: \{ebz, tongli, mutka, renjian\}@msu.edu}}

\maketitle
\begin{abstract}
Current systems used by medical institutions for the management and transfer of Electronic Medical Records (EMR) can be vulnerable to security and privacy threats.
In addition, these centralized systems often lack interoperability and give patients limited or no access to their own EMRs. 
In this paper, we propose a novel distributed data sharing scheme that applies the security benefits of blockchain to handle these concerns. 
With blockchain, we incorporate smart contracts and a distributed storage system to alleviate the dependence on the record-generating institutions to manage and share patient records. 
To preserve privacy of patient records, we implement our smart contracts as a method to allow patients to verify attributes prior to granting access rights. 
Our proposed scheme also facilitates selective sharing of medical records among staff members that belong to different levels of a hierarchical institution.
We provide extensive security, privacy, and evaluation analyses to show that our proposed scheme is both efficient and practical.
\end{abstract}
\begin{IEEEkeywords}
EMR, Distributed data sharing, blockchain, smart contract.
\end{IEEEkeywords}

\section{Introduction}
Medical record-sharing between medical institutions can help improve medical diagnoses, treatment decisions, and enhance the overall patient experience. 
However, the current methods used to manage and share these records can be inefficient or limited due to interoperable systems. 
In addition, legal requirements associated with the sharing of patient records makes patient data especially vulnerable to security and privacy threats.

The history of generating medical records in the U.S. dates back to the 1800s. 
These records included sections such as medical history, family history, patient information, and admitted samples. 
Medical records during this time would be scattered between separate volumes maintained by different branches of service such as surgery and outpatient services as there was no way to link them. 
In 1907, a new system was introduced at St. Mary’s Hospital and the Mayo Clinic to address the issue of disorganized patient records. 
In this system, each new patient is assigned a clinic number, and all data associated would be combined into one patient record. 
By the early to mid-1900s, paper medical records became more complex, eventually becoming difficult to maintain. 
These paper records also presented difficulties such as availability to transfer between hospitals and illegibility from handwriting. 

Over the last fifty years, there has been an effort to transition from paper to EMRs, primarily due to extensive requirements of government programs such as Medicare and Medicaid. 
In 2009, the U.S. invested \$19 billion under the American Recovery and Reinvestment Act for digitization of health records~\cite{fernandez2013security}.  
By 2011, over half of the physicians reported using an electronic system for health record management~\cite{gillum2013papyrus}. 
 
In the U.S., regulations on patient records are governed by the Health Insurance Portability and Accountability Act of 1996 (HIPAA)~\cite{hhs}. 
Under HIPAA, the standards for privacy of Individually Identifiable Health Information (IIHI) sets the national standard for health information privacy protection. 
One of the key protections under HIPAA is to preserve the privacy of the patient records by limiting their data released to the minimal amount for any intended purpose. 
These protections allow the legal movement of sensitive health data needed for the successful administration of services.  Under HIPAA, data encryption is recognized as an \textit{addressable} requirement and a HIPAA governed entity can determine whether it will encrypt
data or implement an appropriate alternative. 
From a security outlook, inferior alternatives to data encryption can severely impede on patient privacy.

While HIPAA sets a guideline for secure and private use of patient health data, it does not define how these systems are to be developed and applied. 
As a result of this, many centralized and private applications have come to market as ways to manage EMRs.
Because these systems are often built on proprietary technology, lack of interoperability between medical institutions is a major concern.
Reuters reported in 2015 that
only 30\% of U.S. hospitals are able to locate, send, and/or receive EMRs for patients that were treated somewhere else~\cite{reuters}. 
This means that, in majority, the burden of transferring health records lies on patients, often times in printed copies from one medical institution to another.
In other cases, patient records are transferred via fax or systems similar to electronic mail. 
These common transfer methods are inefficient and could jeopardize patient privacy and data security. 
There is a lack of an auditable trail of data transfer, and there is no way to preclude who will view a patient record on the other end of a fax. 
Lastly in cases where there is no likely method of data transfer or data cannot be trusted, physicians may resort to duplicate testing causing resource waste.  

In addition to the lack of interoperability, the current electronic record systems are also susceptible to threats such as Distributed Denial of Service (DDoS) attacks~\cite{lau2000distributed} and social engineering attacks~\cite{anderson2008security}. 
Beckers Hospital reported that 18\% of all claimed cyber-attacks in 2017 were attributed to the healthcare sector.
This comprised 28\% of the total cost for all data breaches, showing the staggering cost of a healthcare data breach~\cite{becker}.

As a result of growing security concerns associated with the trust of a third party, in 2009, the blockchain technology was introduced with the evolution of the digital cryptocurrency, Bitcoin~\cite{Nakamoto08Bitcoin}.
One of the main goals of this project was to eliminate the need of a trusted third party, such as banks, to facilitate payments between individuals.
Bitcoin uses the blockchain technology, a cryptographically secure public ledger of transactions between all involved parties that is maintained over a peer-to-peer network.
Since the inception of Bitcoin and the success of its technology, various blockchain-based systems began to evolve that strive to reinforce similar features such as security, privacy, and efficiency.

Based on a concept initially introduced by Nick Szabo in 1994~\cite{szabo1996smart}, smart contracts were later incorporated into blockchain-based systems.
Smart contracts are self-executing pieces of code that can digitally enforce and facilitate verified negotiations of contracts between two participating individuals without the need of a trusted third party.
Today, they continue to appear in systems, for example, Ethereum~\cite{wood2014ethereum}, that aim to provide services beyond simple payments.
Using such smart contracts, users can design and customize their own systems on top of blockchains such as Ethereum.
Similar to payment transactions, the deployment and execution of smart contracts is immutable, irreversible and can be tracked over the public blockchain.
Therefore, smart contracts are well suited to facilitate data management and sharing.
They can be used by data owners to define access control mechanisms that grant certain data users access to the data.

In this paper, we propose a novel distributed data sharing scheme that leverages blockchain and smart contracts.
Our proposed scheme aims to provide security, privacy, and efficiency to the data owners and users of the system.
In terms of security, patient records are stored and protected from malicious data users.
Our proposed scheme can also facilitate private and selective record-sharing; allowing patients to share selectively different parts of their record with distinct data users based on their privacy preferences.
Using our proposed scheme, the record-generating institutions can be entirely eliminated from the record-sharing process.

The work presented in this paper expands significantly on the model presented in \cite{zaghloul18}.
In contrast, our proposed scheme is distributed in terms of data management and storage, and empowers the patients over their records whereby minimizing dependency on the record-generating institutions.
The main contributions of this paper can be summarized as follows:

(i) We develop a novel distributed electronic medical record-sharing scheme for patients that is secure, preserves the privacy of patient data, and is efficient.

(ii) We propose a distributed method for verifying medical institution staff member attributes over the blockchain before issuing them access keys.
This method can help minimize trust and dependencies on key-issuers when verifying attributes of staff members.

(iii) We conduct comprehensive security and privacy analyses of the proposed scheme. 

(iv) We implement our proposed scheme using smart contracts and deploy them over the Ethereum blockchain for performance evaluation and numerical demonstration.

The rest of this paper is organized as follows.
In Section~\ref{sec:relatedwork}, the related work is reviewed. 
In Section~\ref{sec:prelim}, preliminaries are introduced that summarize key concepts used in this research. 
Next, in Section~\ref{sec:problem}, the problem formulation is described outlining the system model and our design goals.
In Section~\ref{sec:proposed}, our proposed scheme is presented in detail outlining the proposed algorithms. 
Following that, in Section~\ref{sec:security} we formally prove the security of our proposed scheme.
In Section~\ref{sec:evaluation}, we present a performance and application analysis that is supported with numerical results. 
Finally, in Section~\ref{sec:conclusion}, a conclusion is drawn to summarize our research.

\section{Related Work}\label{sec:relatedwork}

In 2015, a decentralized data management scheme was introduced that facilitated access-control management over a blockchain~\cite{zyskind2015decentralizing}.
In this system, the actual data records are stored in an off-blockchain storage while pointers to these records are maintained by a key-value storage over the blockchain.
This solution helps simplify the amount of data processed on the blockchain.
However, the method used to define access policies in this scheme does not consider hierarchical data sharing. A user that desires to share files selectively among multiple users in a hierarchy will have to define several access policies that could become a complex problem as the number of users increases drastically.

In 2016, MedRec~\cite{azaria2016medrec}, the first functional electronic medical record-sharing system built on some concepts from~\cite{zyskind2015decentralizing} was introduced. 
This work builds on three Ethereum~\cite{wood2014ethereum} smart contracts that manage the authentication, confidentiality, and accountability during the data sharing process.
In this system, the primary entities involved in maintaining the blockchain are the parties interested in gaining data, such as researchers and public health authorities.
In return, the institutions are rewarded with access to aggregate and anonymized data.
However, the success of such a system is dependent on the participation of entities that maintain the system in return for data.
In addition, similar to~\cite{zyskind2015decentralizing}, MedRec does not consider hierarchical data sharing.

In 2017, another functioning electronic medical record-sharing scheme was presented to provide a secure solution using the blockchain~\cite{dubovitskaya2017secure}.
The system uses a cloud-based storage system to store the medical records.
With centralized storage, the system becomes liable to a single point of failure.
Similar to MedRec, this work builds over a private blockchain, which is a monitored blockchain where each node involved in maintaining consensus is known.
Studies such as~\cite{dinh2018untangling} and~\cite{dinh2017blockbench} present possible methods to evaluate private blockchains and potential concerns and vulnerabilities.
They show that there is a trade-off between performance and security.

In 2018, a study was conducted to discuss several open problems in order to use blockchains for data management and analytics~\cite{vo2018research}.
The study shows that more research is required in order to leverage the capabilities of current data management systems into blockchain-based models.
In addition, the study reflects the importance of improving the security and privacy countermeasures.

\section{Preliminaries}\label{sec:prelim}

\subsection{Ciphertext Policy Attribute-Based Encryption}\label{subsec:cp-abe}
Ciphertext Policy Attribute-Based Encryption (CP-ABE) is a fine-grained access control and encryption scheme~\cite{bethencourt2007ciphertext}.
It allows users to share their data selectively by using access policies that are integrated into the ciphertexts during encryption.
CP-ABE formally divides the process into four main functions.

(i) $\mathsf{Setup}(1^\kappa)$:
a probabilistic function with input security parameter $\kappa$. 
The function is performed by a key-issuer to generate a public key $\PK$ and master key $\MK$.

(ii) $\mathsf{Key Generation}(\MK,\bbA)$:
a probabilistic function carried out by a key-issuer to generate a unique secret key $\SK$ for a data user based on the $\MK$ and a set of attributes $\bbA = \{\A_1,\A_2\dots,\A_n\}$ possessed by the user.

(iii) $\mathsf{Encryption}(\PK$,$\mathsf{m}$,$\calT)$:
a probabilistic function carried out by the data owner to encrypt message $\mathsf{m}$ under an access policy $\calT$ and generate ciphertext $\CT$.

(iv) $\mathsf{Decryption}(\CT,\SK)$:
a deterministic function carried out by the data user to decrypt a ciphertext $\CT$ using the uniquely generated secret key $\SK$ for the user.

\subsection{Privilege-based Multilevel Organizational Data-sharing}\label{subsec:pmod}
Privilege-based Multilevel Organizational Data-sharing scheme (P-MOD)~\cite{zaghloul2018p} is an extension of the CP-ABE that handles data sharing in complex hierarchical organizations.
P-MOD partitions a data file into multiple segments based on user privileges and data sensitivity. 
Each segment of the data record is shared according to user privileges and the set of attributes that satisfy the hierarchical access policies.

\para{Privilege-based Access Structure}
The privilege-based access structure divides the data users of an organization into a hierarchy that consists of $k$ levels, $\{\calL_1,\calL_2,\dots,\calL_k\}$.
Each level is associated with an access policy, $\{ \calT_1,\calT_2,\dots,\calT_k \}$.
An access policy $\calT_i$ specifies a set of rules defined using logic gates and the different sets of attributes that can satisfy these rules.
Data users in possession of a correct set of attributes that can satisfy a certain access policy $\calT_i$ belong to level $\calL_i$.

\para{Data Partitioning and Encryption}
The data owner partitions a data file $\calF$ into a set of $k$ record segments, that is $\calF=\{\F_1,\F_2,\dots,\F_k \}$. 
Segment $\F_1$ contains the most sensitive information of $\calF$ that is to be shared with data users belonging to level $\calL_1$.
Segment $\F_k$ contains the least sensitive information of $\calF$ that is to be shared with all data users belonging to any level.
Next, the data owner generates a set of secret keys $\{\sk_1,\sk_2,\dots,\sk_k\}$ to encrypt the corresponding segments of $\calF$.
The key $\sk_1$ is randomly selected by the data owner.
The remaining keys are then derived using a cryptographic hash function $\hash$ as follows
\begin{equation} \label{keys}
\sk_{i+1} = \hash(\sk_i).
\end{equation}

A privileged data user that satisfies access policy $\calT_i$ belongs to level $\calL_i$ and is granted key $\sk_i$.
Given $\sk_i$, the data user can derive keys $\{\sk_{i+1},\dots,\sk_k\}$ belonging to levels $\{\calL_{i+1},\dots,\calL_k\}$ as defined in equation~\eqref{keys}.
However, given the properties of $\hash$ function, $\sk_i$ cannot be used to derive any of the keys $\{\sk_1,\dots,\sk_{i-1}\}$.
Each $\F_i \in \calF$ is then encrypted with its corresponding generated secret key $\sk_i$.
Finally, each $\sk_i$ is encrypted with CP-ABE under its corresponding access policy $\calT_i$.

\subsection{Blockchain}
The blockchain is a public ledger that stores all the cryptographically processed transactions performed over a peer-to-peer network.
Initially implemented by Bitcoin~\cite{Nakamoto08Bitcoin}, it provides its users with transaction confirmations to track ownership rights of the Bitcoin (BTC) cryptocurrency.
The Bitcoin blockchain is based on a distributed consensus, \emph{Proof-of-Work} (PoW), that allows any past and present online transaction to be verified.
It consists of blocks $\{ \B_0,\B_1,\cdots,\B_n \}$, each carrying a set of validated transactions, where $\B_0$ represents the first block and $\B_n$ represents the most recent block attached to the blockchain.
Each block $\B_i$ incorporates the cryptographic hash of its preceding block $\B_{i-1}$ to form the complete blockchain.

In Bitcoin, transactions are generally limited to transferring ownership rights from one user to the other.
Recently, Ethereum was implemented that extended the transaction capabilities of Bitcoin.
Ethereum is an open source blockchain-based network that provides its users with a platform to build, deploy and run decentralized applications~\cite{buterin2017ethereum,wood2014ethereum}.
The network nodes offer a decentralized Turning-complete virtual machine. 
The Ethereum Virtual Machine (EVM) can execute scripts uploaded by the users (referred to as \textit{smart contracts}) over the public and non-trusted network nodes.
In addition, transactions executed by the EVM are attached with receipts that may contain log entries.
These logs represent the results of \textit{events} having fired from the smart contracts.

\para{Smart Contracts}
Smart contracts are pieces of code that enable users to write their own rules for ownership, transaction formats, and state transition functions.
This means that two parties can digitally interact following a set of customized rules without the need of a trusted third party to secure the transaction.
The deployment and/or interaction with smart contracts are immutable, permanent, and irreversible.
The process of deploying or interacting with an already deployed smart contract over the blockchain requires the users to connect to the peer-to-peer network through a client.

\para{Network client}
A client is any node within the peer-to-peer network that can interact with the blockchain.
Interactions may include parsing transactions, verifying transactions/blocks, deploying or executing smart contracts and everything related.
In general, each client generates its own unique public and private key pair~\cite{baumgart2007s} before establishing connections with other nodes and interacting with the blockchain.
The public key $\pub$ is used to generate a public $\mathsf{address}$ for the node within the network, where $\mathsf{address} = \hash(pub)$.
The private key $\pr$ is maintained securely by the node and is used to sign transactions and prove ownership rights.

\para{The EVM}
In Ethereum, the network nodes manage the EVM and are responsible for deploying/executing smart contracts in return for a reward that keeps them incentivized.
Rewards are paid in ETH (the Ethereum currency) by the nodes deploying or transacting with an already deployed smart contract.
To calculate the rewards, the EVM uses a special unit known as \textit{gas} that represents the amount of computational work performed by the network nodes.
More complex transactions require more gas to be executed.
Gas also helps prevent network spamming attempts that try to exhaust the computational power of the network nodes by making these attacks expensive.
The transacting nodes then specify a \textit{gas price}, the amount of ETH they are willing to pay for each unit of gas.
In cases where smart contracts contain buggy code or error, execution may continuously go on while consuming more funds.
Therefore, to prevent such situations, the nodes also define a \textit{gas limit}, a maximum amount of gas they are willing to pay when transacting.
If the gas limit runs out before execution is complete, the execution is halted and the transaction fails.
This means, if the gas limit and/or gas price defined by a transacting node is too low, the transaction may not be executed.

\para{Types of blockchains}
In the majority of blockchain platforms, users can choose to set up three different blockchain environments.
The public blockchain is the decentralized version in that any participating client can interact and participate in the consensus process.
Consortium blockchains are partially decentralized where the consensus process is performed by a pre-selected set of client nodes.
Finally, private blockchains are entirely centralized in which one organization has the complete authority to manipulate the blockchain as it desires.

\subsection{InterPlanetary File System} \label{ipfs}
InterPlanetary File System (IPFS) \cite{ipfs} is a peer-to-peer (P2P) network used to store and share content-addressable data over the network nodes in a distributed manner.
The network runs a stack of protocols that are responsible for storing and accessing the data stored over it.

\para{Identities} 
To join the P2P network, nodes first generate unique node identities before establishing connections.
IPFS node identity generation is based on S/Kadmelia~\cite{baumgart2007s} where each node solves a hard Proof-of-Work (PoW)~\cite{Nakamoto08Bitcoin} crypto-puzzle before generating a public and private key pair~\cite{baumgart2007s}.
The public key is then cryptographically hashed to generate the identity of the node, $\mathsf{NodeId}$. 
Nodes that are willing to accept incoming connection requests verify that the $\mathsf{NodeId}$ matches the hash computation of the public key of the requesting node.

\para{Network}
The network layer in IPFS manages the established connections between the connected peers.
It is designed to run on top of any network, hence, it includes various underlying protocols. 
The network layer eliminates the need of IP addresses allowing IPFS to be used in overlay networks.
It relies on a customized address format that is stored in $\mathsf{multiaddr}$ formatted byte strings.
The purpose of these strings is to express the details of the customized addresses used by the underlying network.

\para{Routing}
The routing layer keeps track of the addresses of nodes within the network and the data they store.
It uses a Distributed Hash Table (DHT) inspired by S/Kademlia \cite{baumgart2007s} and Coral \cite{freedman2004democratizing} to identify the data stored over any node of the network.
Data of size 1KB or less is stored directly on the DHT.
For data of larger sizes, the DHT stores the $\mathsf{NodeIds}$ of the nodes storing the data.

\para{Exchange}
IPFS uses BitSwap, a data exchange protocol inspired by BitTorrent \cite{bittorrent}, to exchange data between nodes.
Bitswap requires each node to share a $\mathsf{want\_list}$ (list of data it is searching for) and a $\mathsf{have\_list}$ (list of data it stores) with its peer nodes.
Nodes are incentivized to share data with their peers even when they are not requesting data through a credit system called BitSwap.
Bitswap keeps a record of how data has been shared previously between nodes to generate a credit balance for each node.
The credit balance is an indicator of how much data a node has shared and can be used by a node to decide whether it should respond to requests or ignore them.

\para{Objects}
By default, IPFS partitions data files into segments of 256KB and then cryptographically hashes each segment.
The resulting digests are used to reference the location of the corresponding segments over the IPFS.
IPFS also forms links between corresponding digests to allow complete data recovery.
The links of a single segment are stored in an IPFS link array as cryptographic hashes of the target data segments.
That means that if a node gains access to a particular segment of a data file, it can also recursively gain access to all of its linked segments.  
However, it cannot obtain access to the unlinked segments of the same file.
IPFS uses a Merkle Directed Acyclic Graph (Merkle DAG) \cite{dag} to maintain these reference links.
This method provides data tampering resistance and single storage of duplicated segments.

\para{Files}
Expanding on GIT \cite{git}, IPFS can store versioned file systems on top of the Merkle DAG \cite{dag}.
An object in this model consists of a variable sized data block (also referred to as a blob), a list containing a collection of blocks, a tree that combines the blocks and lists, and a commit that reflects a snapshot in the version history tree.

\para{Naming}
IPFS uses an InterPlanetary Naming System (IPNS) that builds on Self-certifying File System~\cite{mazieres2000self}.
IPNS is a mutable naming system to immutable data content maintained by the Merkle DAG.
It allows data owners to upload modified versions of their data and redirect the reference links from the previous versions to the updated versions.

\section{Problem Formulation}\label{sec:problem}

Upon visiting a medical institution, a patient may interact with various staff members during the treatment process.
If those staff members can efficiently access the previous medical record(s) of the patient generated by other institutions, they can provide the patient with an enhanced diagnosis and treatment decisions.
For example, an ER physician may learn severe drug allergies or current medication for an unresponsive patient being rushed into the ER from his/her previous records. 
As a result, the physician can prevent iatrogenic illnesses caused by administering inappropriate medication to that patient or harmful drug interactions. 
By accessing critical medical information, patient safety is enhanced while saving time and resources.
However, for the sake of patient privacy, only necessary staff members should be given permission to access patient record(s) from the medical history determined by the profile of the patient. 

In this paper, we denote a record generated by an institution for a patient as $\calR$.
We denote the record attributes as $\{ \R_j \in \calR ~|~ 1 \leq j \leq n \}$, where $n$ is the maximum number of attributes within the record.
The corresponding attribute values of the record generated for the patient can be denoted as $\{a(\R_j) ~|~ 1 \leq j \leq n \}$.

In the current systems, records are often variable in record attributes and may be blocked or intentionally delayed by competitive record-generating institutions. 
To prevent this, record-generating institutions should have minimal control over the records of patients whereby empowering patients over their own records. 
Given the option to empower patients over their own health data, patients may have different attitudes in terms of sensitivity of their record attributes. 
The challenge today is to provide patients with a method that allows them to share the attribute values of their records efficiently and securely while maintaining the privacy preferences
they desire. 
Patients should also be able to selectively share certain parts of their record with the healthcare providers they select. 
In addition, they should not be concerned with the availability nor the guaranteed secure storage of their records. 

\subsection{System Model}
The general model of our proposed record-sharing scheme is illustrated in Fig.~\ref{Fig:model}.
The system consists of six main entities: 

\para{Patients}
Patients are data owners that wish to share their data selectively based on their privacy preferences. 
We denote the set of patients as $\{ p_a \in \calP ~|~ 1 \leq a \leq \infty \}$. 

\para{Medical institutions}
Medical institutions provide medical treatment and generate medical records for the patients.
We denote the set of medical institutions as $\{ m_b \in \mathcal{M} ~|~ 1 \leq b \leq \infty \}$. 

\para{Staff members}
Personnel employed at a medical institution.
We denote the set of staff members at medical institution $m_b$ as $\{ s_{b,l} ~\forall~ 1 \leq l \leq \infty \}$.
Staff members are categorized into groups of similar characteristics based on the set of attributes $\bbA = \{\A_1,\A_2\dots,\A_n\}$ they each possess.
Attributes represent identifiers that are selected from an infinite pool set.
They may be as general as the role of a staff member and could be as unique as a biometric.

\para{Key-issuer}
A semi-trusted party that generates keys for permissioned staff members upon their requests to access parts of the record.
    
\para{Distributed storage P2P network}
A non-trusted P2P network used by the patients to store and share their records with staff members at any medical institution. 

\para{Blockchain P2P network}
A non-trusted P2P network that maintains a blockchain to regulate access permissions of patient records to the staff members of medical institutions.

In general, in order for patients, medical institutions, staff members or key-issuers to interact with any of the two P2P networks, they must run nodes that are capable of connecting to either network.
These nodes may be light-weight nodes (similar to the simple payment verification nodes in Bitcoin~\cite{Nakamoto08Bitcoin}) that can assemble transactions and propagate them to the network.
Running light-weight clients does not require powerful computing machines and can be done via simple machines such as mobile devices making our proposed scheme accessible to everyone.
The common requirement for running either node is being able to generate the public key $\pub$ and private key $\pr$ pair.
\begin{figure}
\centering
\includegraphics[width=\columnwidth]{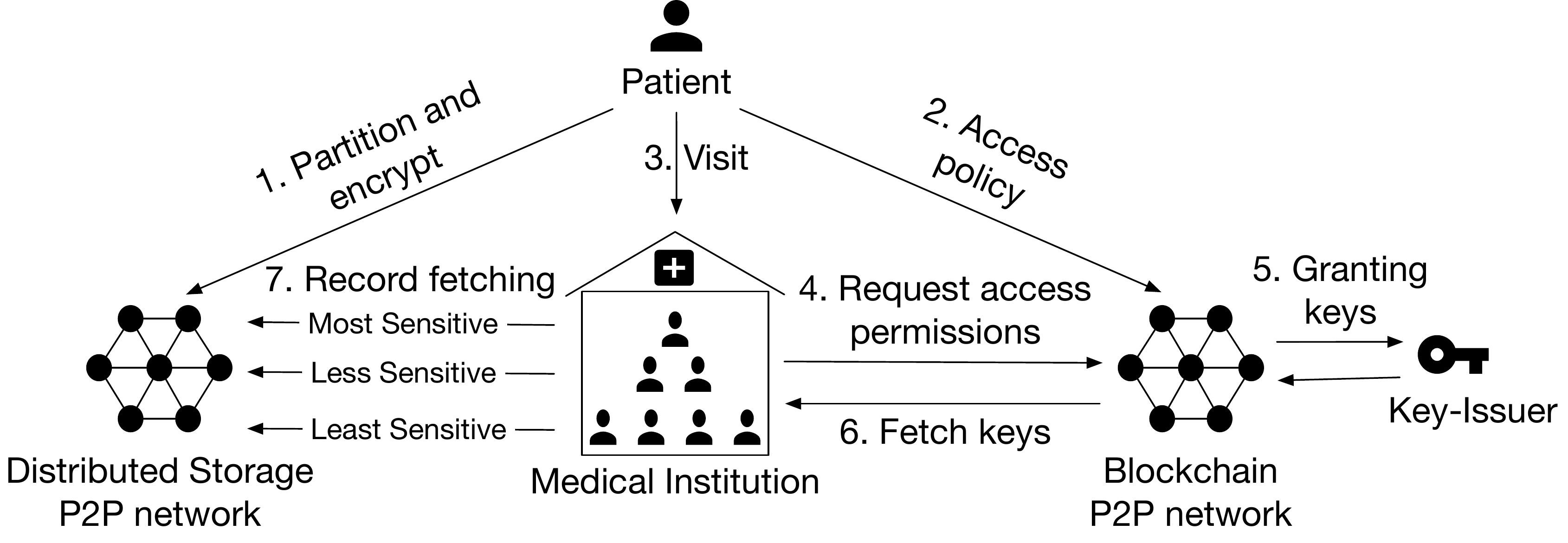}
\caption{General scheme orchestration}
\label{Fig:model}
\end{figure}

\subsection{Design Goals}
Our proposed scheme aims at satisfying the following:

\para{Data Ownership}
Patients are empowered over their records. They have full ownership and control over how to share their records.

\para{Fine-grained access control}
Patients can grant specific access permissions to the desired staff members based on their privacy preferences.
They can selectively share any part of their records using any set of staff member attributes they wish, limiting access to specific parts of their record and to certain staff members at particular medical institutions. 

\para{Collusion resistant}
Two or more staff members that possess different sets of attributes and/or are employed at the same/different institution(s) cannot combine their attributes to gain access to any part of the records they are not authorized to access individually. 

\para{Data confidentiality}
Records are completely protected from any unauthorized staff members that do not possess the correct set of attributes predefined by the patient. 
This includes any type of space that stores the records for the patients.

\para{Distributed storage}
Patient records are stored in a distributed storage network. 
They can be accessed at any time and by any permissioned staff member.
Storage is not centralized and/or controlled by the record-generating institution.

\para{Data access trace-ability}
Patients can trace-back to the entire history of accesses of their records. 
This allows patients to keep track of how their records are being accessed.

\section{The Proposed $d$-MABE Scheme}\label{sec:proposed}
In this section, we first present an overview of the major chain of events in our proposed scheme.
For discussion purposes, we assume that a patient $p_a$ has already received some medical treatment at institution $m_b$ and generated for him/her a medical record $\calR$.
We also assume that the $p_a$ anticipates visiting some other institution $\{ m_b ~|~ b \geq 1 \}$ in the future and would like to share $\calR$ selectively among its staff members.
Based on our description, we then propose a generalized structure through three smart contracts that can be deployed on a blockchain.

\subsection{Scheme Orchestration}
The proposed scheme can be divided into seven major events as demonstrated in Fig.~\ref{Fig:model}.

\para{Record partition and encryption}
Following the work presented in~\cite{zaghloul2018p}, the $p_a$ initially defines a privilege-based access structure that satisfies his/her privacy preferences.
The access structure is divided into $k$ levels $\{\calL_1,\calL_2,\dots,\calL_k\}$ where each level is associated with an access policies $\{\calT_1,\calT_2,\dots,\calT_k\}$.
The $p_a$ then partitions $\calR$ into $k$ segments $\{\R_1,\R_2,\dots,\R_k\}$ of different sensitivity.
Each $\R_i\in \calR$ may represent one or more record attribute(s) depending on how the $p_a$ partitions $\calR$.
Next, the $p_a$ derives a set of symmetric keys $\{\sk_1,\sk_2,\dots,\sk_k\}$ as described in Section~\ref{subsec:pmod}.
Using these keys, the $p_a$ then encrypts each corresponding $\R_i\in \calR$ as
\begin{equation}\label{eq:symenc}
    \ER_i = \symenc_{\sk_i}(\R_i),  
\end{equation}
where $\symenc$ is a symmetric encryption algorithm such as the Advanced Encryption Algorithm (AES)~\cite{daemen2013design}.
Finally, using the $\mathsf{Encryption}$ function discussed in Section~\ref{subsec:cp-abe}, the $p_a$ encrypts each $\sk_i$ under its corresponding $\calT_i$ defined in the privilege-based access structure, such that
\begin{equation}\label{eq:cpabeenc}
\Esk_i = \cpabeenc(\PK,\sk_i,\calT_i),
\end{equation}
where $\cpabeenc$ is the CP-ABE encryption function and $\PK$ is the public key generated during the $\mathsf{Setup}$ phase.
The $p_a$ then stores the generated ciphertexts $\{ \ER_1,\ER_2,\ldots,\ER_k \}$ and $\{ \Esk_1,\Esk_2,\ldots,\Esk_k \}$ over IPFS.
 
\para{Access policy}
To facilitate data management and sharing, the $p_a$ shares the privilege-based access structure that incorporates the access policies $\{\calT_1,\calT_2,\dots,\calT_k\}$.
The access structure is incorporated into a smart contract that is then deployed over the blockchain.

\para{Medical institution visit}
When the $p_a$ visits a medical institution $\{ m_b ~|~ b \geq 1 \}$ to get medical treatment, the $p_a$ interacts with various staff members that may or may not belong to a predefined privilege-based access structure.

\para{Requesting access permissions}
Permissioned staff members will be granted access to certain parts of a record based on the attributes they possess.
To request access, a staff member $s_{b,l}$ interacts with the smart contract previously deployed by the $p_a$ that incorporates the privilege-based access structure.
The smart contract will verify whether the $s_{b,l}$ possesses a set of attributes that can satisfy an access policy within the access structure.
Once the verification is performed the smart contract will publicly announce the access permissions of the $s_{b,l}$ over the blockchain.

\para{Granting keys}
The key-issuer continuously monitors the blockchain for access announcements.
An announcement contains the set of attributes $\bbA$ that the $s_{b,l}$ possesses.
It is a form of verification to the key-issuer that the $s_{b,l}$ possesses set $\bbA$ before generating a secret key $\SK$ using the $\mathsf{Key Generation}$ function described in Section~\ref{subsec:cp-abe}.
The key-issuer then encrypts $\SK$ with the public key $\pub$ of $s_{b,l}$ as the following 
\begin{equation}\label{eq:asymenc}
\ESK = \asymenc_{\pub}(\SK),
\end{equation}
where $\asymenc$ is the asymmetric encryption function.
To share the key with the $s_{b,l}$, the key-issuer transacts with a global smart contract that publicly announces the encrypted key over the blockchain.

\para{Key fetching}
The $s_{b,l}$ can obtain the uniquely encrypted secret key $\ESK$ by monitoring the announcements made over the blockchain.
Once obtained, the $s_{b,l}$ can decrypt $\ESK$ using his/her private key $\pr$ that corresponds to the public key $\pub$ such that
\begin{equation}\label{eq:asymdec}
\SK = \asymdec_{\pr}(\ESK),
\end{equation}
where $\asymdec$ is the asymmetric decryption function.

\para{Record fetching}
Here, the $s_{b,l}$ possesses the necessary key to decrypt the parts he/she has been granted access.
The $p_a$ shares the locations of storage to the encrypted data over IPFS.
Once fetched, the $s_{b,l}$ can decrypt the encrypted symmetric key $\Esk_i$ using the derived key $\SK$ as explained in Section~\ref{subsec:cp-abe}. 
That is
\begin{equation}\label{eq:cpabedec}
    \sk_i = \cpabedec(\Esk_i,\SK),
\end{equation}
where  $\cpabedec$ is the CP-ABE decryption function.
After obtaining $\sk_i$, the $s_{b,l}$ can derive the remaining set of secret keys $\{\sk_{i+1},\dots,\sk_k\}$ using equation~\eqref{keys}.
Finally, the $s_{b,l}$ can decrypt each encrypted partition, such that
\begin{equation}\label{eq:symdec}
    \R_i = \symdec_{\sk_i}(\ER_i),
\end{equation}
where $\symdec$ is the decryption function.

\subsection{Smart contracts}\label{subsec:smartcontracts}
Our proposed scheme includes three main smart contracts: (i)~staff member registration, (ii)~access verification and permission announcements, and (iii)~granting keys.

\begin{algorithm}[t]
\DontPrintSemicolon
\SetAlgoLined

\struct{$\staffMember$}{
  $\staffID$\;
  $\staffAttributes[\upBound]$\;
}

\BlankLine
\variable{$\mathsf{mapping (address} \rightarrow
\staffMember) ~\Map$}

\BlankLine
\tcp{Adding a new staff member}
\Fn{$\addStaffMember (\address, \id, \attributes[\upBound])$}{
    \uIf{$(\msgsender ~\not\in~ \setOfCertifiers )$}{
        \throw \;
    }
    \Else{
        $\Map(\address \rightarrow \id,\attributes[\upBound])$ \;
    }
}

\BlankLine
\tcp{Return the attributes of a staff member}
\Fn{$\getAttributes (\address)$}{
    \return $\Map[\address].\staffAttributes$
}

\caption{$\SMR$ smart contract}
\label{sc1}
\end{algorithm}

\para{Staff member registration}
Staff Member Registration ($\SMR$) is a global smart contract to register staff members of medical institutions that wish to access the patient records.
The process of registering staff members by interacting with this smart contract can be delegated to a number of certified institutions that physically verify staff members against their possessed attributes before registering them. 
This is achieved by incorporating precise policies into the smart contract that require certain identities to trigger the contract.
A certified institution verifies the attributes of a staff member then sends them as input when transacting with the smart contract to be stored over the blockchain.
This process maps the identity (in our case the Ethereum address) of the staff member to the set of attributes possessed.
By observing the blockchain, we can trace-back the on-going activity of these certified institutions as they add, delete, or modify the information of staff members, hence, we can also detect malicious data manipulation.

Algorithm~\ref{sc1} outlines the general functions of the $\SMR$ smart contract.
Lines~1-4 represent a generalized data structure $\staffMember$ that the smart contract uses to store new staff members.
It incorporates the identity $\staffID$ and the array of attributes $\staffAttributes$ of the staff member.
Lines~6-15 represent the two main functions $\addStaffMember$ and $\getAttributes$ of the smart contract.
The $\addStaffMember$ function allows only certified institutions $\setOfCertifiers$ to upload the data of new staff members after physically verifying their attributes.
This conditioned access is to prevent malicious attackers from granting access to themselves or others.
On the contrary, the $\getAttributes$ function can be called by any user. 
Given the address $\address$ of a specific staff member, this function returns the previously verified and stored attributes of this staff member.

\begin{algorithm}[t]
\DontPrintSemicolon
\SetAlgoLined

\Initvar{$\address = \SMR~ \mathsf{address}$}

\event{$\LogAnnounce(\mathsf{address}, \mathsf{attributes}, \calT_i)$}

\BlankLine
\Fn{$\verifyRequest (\msg, \sigma)$}{

    \tcp{Verify 1: validate digital signature}
    $\signer \leftarrow \ecrecover(\msg,\sigma)$\;
    \uIf{($\signer \neq \hash(\pub))$}{
        \throw \;
    }
    \Else{
        \tcp{Verify 2: fetch stored attributes of the staff member}
        $\mathsf{r} \leftarrow \SMR(\address)$\;
        $\fetched \leftarrow \mathsf{r.}\getAttributes(\signer)$\;  
        \tcp{Verify 2: fetched attr. vs access policies}    
        \For{$i\gets1$ \KwTo $k$}{
            \If{($\satisfy(\calT_i,\fetched) == true$)}{
                $\LogAnnounce(\signer, \fetched, \calT_i)$\;
                \textbf{break}
            }    
        }      
    }
}
\caption{$\AVPA$ smart contract}\label{sc2}
\end{algorithm}

\para{Access verification and permission announcements}
The Access Verification and Permission Announcements ($\AVPA$) is a smart contract that incorporates the privilege-based access structure defined by the patients in order to facilitate the selective record management and sharing.
The staff members interested in obtaining parts of the record interact with $\AVPA$ contracts to request access permissions.
The smart contract is outlined in Algorithm~\ref{sc2}.

The smart contract performs a sequential verification process.
This is represented in a single function $\verifyRequest$ that takes two inputs: a message $\msg$ prior to being signed and its signature $\sigma$. That is
\begin{eqnarray}
    \msg = \hash(\staffID),\\
    \sigma = \sign(\pr,\pub,\msg),
\end{eqnarray}
where $\hash$ is the hashing function and $\sign$ is an $\mathsf{ECDSA}$ signing function.

In the initial verification process, the $\AVPA$ smart contract uses an $\mathsf{ECDSA}$ compatible validation function $\ecrecover(\msg,\sigma)$ to validate $\sigma$ by verifying whether
\begin{equation}\label{eq:recover}
    \ecrecover(\msg,\sigma) = \hash(\pub)
\end{equation}
holds true. 
In fact, if the validation function is conducted truthfully, then the correctness of equation \eqref{eq:recover} follows from the $\mathsf{ECDSA}$. 
Next, $\signer$ is compared to the address of the requesting staff member as shown in line~5.
If the addresses match, the contract uses $\signer$ to fetch the previously verified and stored attributes of this staff member in the $\SMR$ contract.
However, this method is liable to replay attacks. 
In section~\ref{sec:security}, we discuss this issue and propose a countermeasure.

If the initial verification is successful, the smart contract executes the second verification as outlined in lines~10-14.
In this process, the $\fetched$ attributes are checked against the access policy $\calT_i$ within the access structure.
The access structure is defined by the patient during contract deployment and maintains the desired privacy settings as discussed in equations~\eqref{eq:symenc} and~\eqref{eq:cpabeenc}.
The access policies are tested sequentially starting from the highest ranked access policy $\calT_1$ at level $\calL_1$.
If the attributes $\satisfy$ the access policy $\calT_i$, the verification is discontinued and the function fires an event $\LogAnnounce$ that announces a permanent log of the smart contract transaction stored over the blockchain.
This event is a public and immutable announcement to ensure that the staff member in possession of $\signer$ should be granted access to the parts of the record $\{ \R_i \ldots \
R_k \}$ corresponding to levels $\{ \calL_i \ldots \calL_k \}$.

\begin{algorithm}[t]
\DontPrintSemicolon
\SetAlgoLined

\event{$\LogKeys(\mathsf{address}, \mathsf{ipfsAddress})$}

\BlankLine
\tcp{Sharing the location of a generated access key}
\Fn{$\addkey (\staffAddress, \ipfsAddress$)}{
    \uIf{($\msgsender ~\not\in~ \setOfIssuers)$}{
        \throw \;    
    }
    \Else{
        $\LogKeys(\staffAddress, \ipfsAddress)$\;
    }
}
\caption{$\GK$ smart contract}
\label{sc3}
\end{algorithm}
\para{Granting keys}
Granting Keys ($\GK$) is a global smart contract that is used to share the location of the generated and encrypted access keys over the distributed storage.
The contract generally consists of a single function, $\addkey$, as outlined in Algorithm~\ref{sc3}.

When the key-issuer observes a $\LogAnnounce$ event fired by the $\AVPA$ smart contract, it generates an access secret key $\SK$ for the specified staff member using the unique set of attributes $\fetched$ stored in the logs.
Next, the key-issuer encrypts this access key with the public key of the staff member as discussed in equation~\eqref{eq:asymenc} and uploads it to the IPFS storage, maintaining its reference location.
Following that, the smart contract fires an event, $\LogKeys$ as shown in line~6, which permanently stores the address of the requesting staff member $\staffAddress$ along with the IPFS storage location $\ipfsAddress$.
Using $\ipfsAddress$, the staff member can fetch the encrypted key $\ESK$ stored over IPFS.
However, as shown in the $\addkey$ function, only certified key-issuers $\setOfIssuers$ are capable of triggering this function.
Therefore, attackers are prevented from spamming the smart contract logs with fake keys.
The staff member can listen for these fired events and obtain the corresponding $\ipfsAddress$ as it appears in the logs.
Using the $\ipfsAddress$, the staff member can fetch $\ESK$ from the IPFS network.
It is important to note that only the staff member in possession of the private key $\pr$ corresponding to the public key $\pub$ will be able to decrypt the fetched encrypted key and obtain $\SK$ as described in equation~\eqref{eq:asymdec}.
Attackers continuously listening to the fired events may be able to learn certain IPFS addresses storing generated encrypted keys, but not the actual keys.
Using the obtained secret key $\SK$, the staff member can then decrypt $\Esk_i$ to generate the secret key $\sk_i$.
Finally, the staff member can decrypt $\ER_i$ stored over IPFS at $\ipfsAddress$ to obtain the record part $\R_i$.

\subsection{Access Permission Revocation}
Attributes possessed by staff members may change over time due to, for example, job switch or retirement.
As a result, the access rights to the patient records should be revoked to be consistent with the access policy.
However, this could be challenging since it requires the attributes of the staff members to be updated periodically.
While adding an expiration date~\cite{pirretti2010secure} to each attribute when registering staff members seems to be a simple solution, it requires the patients to incorporate time constraints into their access policies.

The proposed $d$-MABE scheme can handle this issue efficiently without requiring any extra process for both the patient and the key-issuer.
The staff members only need to renew their registration periodically to ensure freshness of the attribute-based access control.
As a result, if at any point in time a staff member is unable to provide evidence of registration to the level of access claimed, the future access to the patient records will be disabled. 
Consequently, if the staff member attempts to request records he/she is no longer entitled to access, the AVPA smart contract will fail to verify the request.

\section{Security and Privacy Analysis} \label{sec:security}
In this section, we present the security and privacy discussions of our proposed scheme.
We assume our system runs over a blockchain that is maintained by a significant number of nodes, for example, Ethereum. 
In such blockchain platforms, it is very expensive to tamper with verified transactions processed into blocks.
The best bet of the attackers would be to control more than half of the computational power in the network in order to perform 51\% attacks.
We also consider distributed storage networks such as IPFS that are content addressable. 
These networks use the hash computation of the original data when storing and referencing it over the network nodes, resulting in tamper-resistant storage.
Moreover, we consider adversaries that can act as any entity within the system and can manage to connect to either the IPFS or the blockchain network.
Such adversaries are capable of generating fraudulent transactions and propagating them through the blockchain network.
By fraudulent transactions, we mean transacting with smart contracts in an effort to access data they are not granted access.
They may also deploy their own smart contracts over the blockchain, request/store data over the IPFS nodes, and continuously monitor the network channels.

\subsection{Security Analysis}
We first discuss how our proposed scheme is secure against potential attacks.
Next, we present good practices that may help improve the security of the system.

\begin{theorem}
The proposed scheme is secure against replay attacks.
\end{theorem}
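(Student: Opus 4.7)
The plan is to argue that the scheme's security against replay attacks follows not from preventing the replay itself, but from the structural binding between the $\ecrecover$ output, the address logged by $\LogAnnounce$, and the public key under which the secret key $\SK$ is asymmetrically re-encrypted in equation~\eqref{eq:asymenc}. Concretely, I would fix an adversary $\calA$ that observes the network, captures a legitimate tuple $(\msg,\sigma)$ sent by a staff member $s_{b,l}$, and resubmits it to the $\AVPA$ contract from an account under $\calA$'s control, then show that the adversary's view is no better than its view without the replay.

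First I would formalize the threat model: $\calA$ may act as any network client, eavesdrop on all blockchain traffic, and initiate transactions, but cannot forge $\mathsf{ECDSA}$ signatures (EUF-CMA) and cannot break the asymmetric encryption $\asymenc$ (IND-CCA). Next I would trace the replayed transaction step by step. By equation~\eqref{eq:recover}, $\ecrecover(\msg,\sigma)$ depends only on $(\msg,\sigma)$ and not on the submitter, so it returns $\hash(\pub)$ of the original signer $s_{b,l}$, independent of $\calA$'s account. The call to $\SMR.\getAttributes(\signer)$ therefore returns exactly $s_{b,l}$'s attributes; the loop in lines~10--14 of Algorithm~\ref{sc2} fires $\LogAnnounce$ with $\signer$ set to $s_{b,l}$'s address, not $\calA$'s.

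Next I would argue that the downstream $\GK$ execution carries this binding forward. The key-issuer reads $\signer$ from the log and, per equation~\eqref{eq:asymenc}, computes $\ESK = \asymenc_{\pub}(\SK)$ using the public key associated with that address, namely $\pub$ of $s_{b,l}$. The resulting $\ipfsAddress$ logged by $\LogKeys$ is public anyway, and would have been generated even absent the replay if $s_{b,l}$ had transacted directly; thus the replay adds no new information to $\calA$'s public view. To convert the $\LogKeys$ entry into knowledge of $\SK$, $\calA$ would need to decrypt $\ESK$, which by equation~\eqref{eq:asymdec} requires $\pr$; by the IND-CCA assumption on $\asymenc$, $\calA$ cannot do this with non-negligible advantage. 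A standard reduction then shows that any adversary obtaining $\SK$, $\sk_i$, or any plaintext segment $\R_i$ via replay would yield a distinguisher against $\asymenc$.

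The subtle point, and the step I expect to require the most care, is ruling out indirect gains from the replay: for instance, the replay could inflate gas costs borne by honest parties, or cause spurious $\LogAnnounce$ entries that confuse the key-issuer. I would address these by noting that (i) the $\AVPA.\verifyRequest$ function is invoked by the submitter, so the gas cost falls on $\calA$, not on $s_{b,l}$; and (ii) duplicated $\LogAnnounce$ events for the same $\signer$ and $\fetched$ only cause the key-issuer to re-encrypt $\SK$ under the same $\pub$, producing ciphertexts that remain unreadable to $\calA$. Combining the trace with the IND-CCA reduction then establishes that the probability $\calA$ recovers any part of $\calR$ via a replay exceeds the no-replay baseline by at most a negligible function of the security parameter $\kappa$, which is the desired statement.
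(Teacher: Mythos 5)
Your proof takes a genuinely different route from the paper's, and in doing so it misses the paper's central idea. The paper does not argue that a replayed $(\msg,\sigma)$ is harmless; on the contrary, it explicitly concedes that the bare signature check in the $\AVPA$ contract \emph{is} liable to replay, and its proof consists of (i) a countermeasure --- staff members sign $\hash(\staffID\Vert \mathsf{T})$ for a timestamp $\mathsf{T}$, every submitted time-stamped signature is recorded on chain, and nodes refuse to execute a request whose signature has appeared before --- and (ii) a quantitative argument that defeating this countermeasure requires reversing an already-confirmed transaction, whose success probability is bounded by the Nakamoto-style negative-binomial / random-walk analysis of an attacker with hash power $q$ racing honest miners with hash power $p$, so that the attack becomes infeasible once $q<p$ and enough confirmations accumulate. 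None of this appears in your argument, so you have not reproduced (or replaced) the mechanism the theorem is actually about.

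Your alternative argument --- that $\ecrecover$ binds the recovered $\signer$ to the original staff member, so the key-issuer re-encrypts $\SK$ under the original $\pub$ and the replaying adversary learns nothing it could not already compute --- is a reasonable confidentiality claim, and the reduction to the security of $\asymenc$ is sound as far as it goes. But it establishes a weaker property than the paper intends. First, the scheme's stated design goals include data access trace-ability: the patient audits the history of $\LogAnnounce$ events. Under your argument an adversary can still freely inject $\LogAnnounce$ entries attributing accesses to $s_{b,l}$ that never happened, corrupting the audit trail and driving the key-issuer to generate and publish keys on the strength of forged requests; dismissing these as ``indirect gains'' does not dispose of them, since preventing exactly this is what the timestamp mechanism is for. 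Second, your trace of the replayed transaction assumes the check in line~5 of Algorithm~2 passes when $\calA$ submits from its own account, whereas the contract compares $\signer$ against the address of the requesting party; the paper resolves this ambiguity not by analyzing the check but by rejecting duplicate signatures outright. To match the paper you would need to add the timestamping countermeasure and the transaction-reversal probability bound; your IND-CCA argument could then survive as a supplementary observation about what a successful replay would and would not reveal.
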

\begin{proof}
The initial verification process performed by the $\AVPA$ smart contract requires the requesting staff members to submit $\mathsf{ECDSA}$ digital signatures $\sigma$ to prove their identities.
Given that all data sent over the blockchain is public, malicious attackers can easily maintain a copy of the submitted signatures and later on impersonate the honest staff members giving them access to data they should not be able to access.
To workaround this issue, staff members are required to time-stamp their digital signatures before transacting with the $\AVPA$ contract, such that
\begin{equation}
\sigma \leftarrow \sign(\pr,\pub,\hash(\staffID\Vert \mathsf{T})),
\end{equation}
where $\mathsf{T}$ is the time at which the signature is generated.
Any submitted time-stamped signature is stored over the blockchain in order for the nodes to check if it has been submitted before.
The blockchain nodes will not execute any requests associated with time-stamped signatures that have appeared previously. 
Therefore, to be able to perform replay attacks requires the attackers to reverse the transactions that contain an already used digital signature.
The success of reversing a transaction can be modeled as a race between the honest miners and the attackers trying to generate blocks by competing to solve a hard cryptopuzzle known as Proof-of-Work~\cite{Nakamoto08Bitcoin}.
The race can be modeled as a binomial random walk.
It is denoted as $z$ which represents the number of blocks generated by the honest miners with computational power $p$ minus the number of blocks generated by the attackers with computational power $q = 1 - p$.
This race can be derived as
\begin{equation} \label{eq:random_walk}
z_{i+1} = 
\begin{cases}
z_i + 1, & \text{with probability } p,\\
z_i - 1, & \text{with probability } q.
\end{cases}
\end{equation} 
Using the negative binomial distribution, we can model the probability of success of the attackers.
Assume the key generator waits for $n$ blocks to be generated by the honest miners with computational power $p$ before generating a private key for the requesting attacker. 
Also assume that, at that time, the attacker is able to secretly generate $m$ blocks with computational power $q = 1 - p$, where $m = n - z - 1$.
By definition, this can be modeled as the $m$ number of blocks that the attacker can generate before the $n$ number of blocks generated by the honest miners.
Therefore, the probability of a reversing a transaction for a given value $m$ is 
\begin{equation}
P(m) = \binom{m+n-1}{m}\times p^nq^m.
\end{equation}

Overall, the probability for an attacker to surpass successfully the number of blocks generated by the honest miners can be computed as
\iffinal
\begin{align} \label{eq:sucess2}
\begin{split}
P_s &= \sum_{m=0}^{\infty}P(m) \times Q_{n-m-1} \\
&= 1 - \sum_{m=0}^{\infty}\binom{m+n-1}{m}\times p^nq^m \\
& ~~\times \begin{cases} 
1 - \left(\frac{q}{p}\right)^{n-m}, & \text{if } q < p \text{ or } k \leq n-m, \\
1 - 1 = 0, & \text{if } q > p \text{ or } k > n-m.
\end{cases}
\end{split}
\end{align}
\else
\begin{align} \label{eq:sucess2}
P_s &= \sum_{m=0}^{\infty}P(m) \times Q_{n-m-1} \nonumber\\
&= 1 - \sum_{m=0}^{\infty}\binom{m+n-1}{m}\times p^nq^m \times \begin{cases} 
1 - \left(\frac{q}{p}\right)^{n-m}, & \text{if } q < p \text{ or } k \leq n-m, \\
1 - 1 = 0, & \text{if } q > p \text{ or } k > n-m. 
\end{cases}
\end{align}
\fi

Equation~(\ref{eq:sucess2}) confirms that when $q>p$, the attacker will always succeed since $P_s=1$.
When $q<p$, the probability of success can be defined as
\begin{align} \label{eq:success_graphs}
P_s 
&= 1 - \sum_{m=0}^{n-1}\binom{m+n-1}{k}\times p^nq^m \times \left(1 - \left(\frac{q}{p}\right)^{n-m}\right) \nonumber\\
&= 1 - \sum_{m=0}^{n-1}\binom{m+n-1}{m}\times (p^nq^m - p^mq^n).
\end{align}

Therefore, as more blocks are appended to the blockchain and $q<p$, replay attacks are infeasible.
Moreover, by incorporating active registration, we can limit the pool of malicious attackers to only the staff members with the same level of access, making it even more infeasible. 
\end{proof}

\begin{theorem}
The proposed smart contracts are collusion resistant.
\end{theorem}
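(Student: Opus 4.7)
The plan is to reduce collusion resistance of the scheme to the collusion resistance of the underlying CP-ABE construction of Bethencourt \emph{et al.}, and then to verify that the smart-contract layer does not open any alternative aggregation channel. Concretely, I would phrase the claim as follows: for any coalition of staff members $\{s_{b_1,l_1},\ldots,s_{b_m,l_m}\}$ holding attribute sets $\bbA_1,\ldots,\bbA_m$, the coalition can recover a segment $\R_i$ only if at least one $\bbA_j$ satisfies the access policy $\calT_i$. A proof by contradiction then fits naturally: assume such a coalition can decrypt $\R_i$ without any individual being authorised at level $\calL_i$.

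First I would recall that each invocation of $\mathsf{KeyGeneration}(\MK,\bbA_j)$ embeds a fresh, user-specific randomness into $\SK_j$, so by the established collusion-resistance property of CP-ABE no polynomial-time coalition can combine $\SK_1,\ldots,\SK_m$ into an effective key for the union $\bbA_1\cup\cdots\cup\bbA_m$. Hence the only remaining avenue is to force the key-issuer to produce an $\SK$ corresponding to a pooled set of attributes. I would rule this out by inspecting each contract in turn. The $\SMR$ contract maintains a one-to-one mapping $\address\mapsto \staffMember$, and its $\addStaffMember$ function rejects any caller outside $\setOfCertifiers$; thus no coalition member can register a synthetic identity carrying $\bbA_{j_1}\cup\bbA_{j_2}$. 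The $\AVPA$ contract binds every request to an ECDSA-recoverable signer via $\ecrecover(\msg,\sigma)=\hash(\pub)$, and then calls $\getAttributes(\signer)$, so the attribute vector passed to the issuer is exactly the set registered under that single $\signer$. Finally, the $\GK$ contract transmits $\ESK=\asymenc_{\pub}(\SK)$ targeted to the individual recipient, so an eavesdropping coalition partner cannot obtain another member's raw $\SK$ without breaking the asymmetric encryption.

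Combining these observations, I would argue that the key-issuer only ever issues keys corresponding to individually registered attribute sets, and that each resulting $\SK$ is accessible only to its intended holder. Any successful coalition attack therefore reduces to either (i) forging an ECDSA signature for another staff member's address, (ii) breaking the asymmetric encryption used in the $\GK$ contract, or (iii) breaking the collusion resistance of CP-ABE itself. Each of these is infeasible under the standard assumptions already invoked in the scheme. The reduction can be made formal by constructing a simulator that, given a CP-ABE collusion adversary, honestly simulates the $\SMR$, $\AVPA$, and $\GK$ contracts for the coalition and forwards $\mathsf{KeyGeneration}$ queries to the CP-ABE challenger.

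The main obstacle, I expect, will not be the cryptographic reduction itself but rather delimiting the adversarial model cleanly: once a staff member decrypts a segment within her authorised level, she is free to share the resulting plaintext out of band, and no ABE-style argument can prevent that. I would therefore state explicitly that the theorem concerns the inability to decrypt unauthorised ciphertexts, not the leakage of already-decrypted plaintext, and anchor the remainder of the proof in the three reductions above.
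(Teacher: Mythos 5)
Your proposal is correct, but it takes a broader route than the paper. The paper's proof confines itself to the smart-contract layer: it observes that each registered address is bound to exactly one attribute set in $\SMR$, that $\AVPA$ accepts only a single signature per request, and then argues that a coalition would therefore have to reproduce a valid $\mathsf{ECDSA}$ signature of an already-registered staff member who holds the coveted attributes --- which it rules out by walking through the $(r,s)$ signature computation and noting the dependence on the fresh random nonce $d$. It never explicitly invokes the collusion resistance of the underlying CP-ABE key generation, nor the confidentiality of the key delivery in $\GK$; those are left implicit in the preliminaries. Your decomposition into three reduction targets (CP-ABE key randomization, ECDSA unforgeability at $\AVPA$, and the asymmetric encryption protecting $\ESK$ in transit) is strictly more complete, and your framing of the signature step as \emph{existential forgery under the victim's key} is cleaner than the paper's ``regenerate a specific signature'' phrasing, which conflates reproducing a particular $(r,s)$ pair with the actual adversarial goal of producing \emph{any} signature that passes $\ecrecover$. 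What the paper's narrower argument buys is brevity and a concrete, self-contained computation; what yours buys is coverage of the attack surface the paper silently assumes away (pooling decryption keys directly at the CP-ABE layer, or intercepting another member's $\SK$ off the blockchain) and an honest statement of the model's limits regarding out-of-band plaintext sharing, which the paper does not address.
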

\begin{proof}
Our proposed scheme requires staff members to become registered through the $\SMR$ smart contract before being able to request access permissions through the $\AVPA$ contracts.
During registration, each staff member address is mapped to a single set of attributes $\bbA = \{\A_1,\A_2\dots,\A_n\}$ after providing proof of possession to the certified institutions. 
The $\AVPA$ smart contract can only accept a single signature as input when triggered.
Therefore, to perform a collusion attack, the attackers are required to regenerate a single digital signature of an already registered staff member that possesses the set of attributes desired.
For $\mathsf{ECDSA}$, a signature is generated by randomly selecting a cryptographically secure random integer $d$, such that
\begin{equation}\label{eq:curve}
    (x_1,y_1) = d \times G,
\end{equation}
where $(x_1,y_1)$ are the calculated elliptic curve points, $G$ is the generator of the elliptic curve with a large prime order $n$, and $d \in_r [1,n-1]$.
Next, the digital signature is calculated as two components $\sigma = (r,s)$. That is
\begin{equation}\label{eq:r}
r = x_1 ~mod~ n,
\end{equation}
\begin{equation}\label{eq:s}
s = d^{-1}(z + r\cdot pr) ~mod~ n,
\end{equation}
where $z$ is the $L_n$ leftmost bits of $\msg$ such that $L_n$ is the bit length of the group order $n$.
If any of the values $r$ or $s$ are equal to zero, $d$ is randomly selected again and equations~\eqref{eq:curve}, \eqref{eq:r}, and~\eqref{eq:s} are recalculated.
Since the signature components $(r,s)$ are both derived based on the random value $d$, it is infeasible for the attackers to regenerate a specific signature that belongs to a certain staff member, hence, the smart contracts are collusion resistant.
\end{proof}

\begin{theorem}\label{theorem:ipfs}
Using a content-addressable storage such as IPFS ensures that data is tamper-resistant.
\end{theorem}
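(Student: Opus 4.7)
The plan is to exploit the content-addressable nature of IPFS together with the collision-resistance of the underlying cryptographic hash function. Recall from Section~\ref{ipfs} that IPFS partitions each file into segments of at most 256KB, hashes each segment, and uses the resulting digest as the reference (address) to the segment. Segment references are then combined through a Merkle DAG into trees, lists, and commits. The proof I would give shows that any modification to stored data is either detected immediately (because the digest no longer matches the reference being resolved) or requires breaking the collision resistance of $\hash$.

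First I would set up the adversary model inherited from the beginning of Section~\ref{sec:security}: an attacker that can store and serve arbitrary data over IPFS nodes and that attempts to substitute some honestly-stored segment $\ER_i$ with a forged segment $\ER_i'$ so that an unsuspecting staff member retrieves $\ER_i'$ while querying the original address. Let $h_i = \hash(\ER_i)$ denote the address that the patient obtained when uploading $\ER_i$ and later published via the $\GK$/$\AVPA$ announcement workflow. When a staff member fetches the object at $h_i$, the client recomputes $\hash$ on whatever bytes the network returns and compares it to $h_i$ before accepting the data. Hence, a successful undetected swap requires $\hash(\ER_i') = \hash(\ER_i)$ with $\ER_i' \neq \ER_i$.

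Next I would argue that producing such a pair is exactly the problem of finding a second preimage (and more generally a collision) for $\hash$. Under the standard assumption that IPFS instantiates $\hash$ with a cryptographically secure hash function (e.g., SHA-2), this probability is negligible in the security parameter $\kappa$, so no polynomial-time adversary can succeed with non-negligible advantage. I would then extend the argument recursively along the Merkle DAG: because links between segments are themselves stored as hashes of their targets, tampering with any leaf segment propagates upward, forcing the attacker to produce collisions simultaneously at every ancestor object in the DAG, which only compounds the infeasibility. A symmetric argument rules out tampering with the $\Esk_i$ ciphertexts and the encrypted access keys $\ESK$ that are stored at $\ipfsAddress$ and referenced via the $\LogKeys$ event on the blockchain.

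The main obstacle I anticipate is not the core hash argument, which is standard, but pinning down precisely which integrity guarantee is being claimed. In particular, IPFS does not by itself prevent an adversary from refusing to serve a segment or from publishing an unrelated object under a different address; it only guarantees that any object successfully retrieved under a given address is bit-identical to the object whose hash produced that address. I would therefore be careful to state the theorem as tamper-\emph{resistance} of the retrieved content against its published address, and separate this from availability, which must instead be addressed through the incentive argument sketched for BitSwap in Section~\ref{ipfs}. With that scoping made explicit, the proof reduces cleanly to the collision-resistance of $\hash$ and completes.
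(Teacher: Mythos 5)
Your proof is correct and follows essentially the same route as the paper's: both reduce tamper-resistance to the computational infeasibility of finding $\R_i' \neq \R_i$ with $\hash(\R_i') = \hash(\R_i)$ for the content-addressing hash. Your additions --- making the recompute-and-compare step on retrieval explicit, propagating the argument up the Merkle DAG, and separating integrity from availability --- are refinements the paper's terser proof omits, but they do not change the underlying argument.
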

\begin{proof}
In IPFS, records are initially partitioned into $n$ smaller segments before being stored. That is
\begin{equation}
    \calR = \{\R_1 \ldots \R_n\},
\end{equation}
where each $\R_i \in \calR$ is 256KB, by default.
However, the size of a partition may also be customized by the patient as desired.
Next, each $\R_i \in \calR$ is hashed as
\begin{equation}
    h_i = \hash{(\R_i)},
\end{equation}
where $h_i$ is the resulting digest and is used to reference data stored in the network nodes through the DHT.
For the $\hash$ function, it is computationally infeasible to find an $\R_i' \neq \R_i$, such that 
\begin{equation}
    \hash(\R_i') = \hash(\R_i).
\end{equation}
Therefore, it is computationally infeasible for an attacker to tamper with the records of the patients.
\end{proof}

\begin{theorem}
The proposed scheme can counter Distributed Denial of Service (DDoS) attempts.
\end{theorem}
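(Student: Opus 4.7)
The plan is to decompose the DDoS attack surface of the scheme into three separable targets and argue that each is either inherently resistant or bounded by an economic barrier. The three targets are: (i) the IPFS network that holds the encrypted segments $\ER_i$ and encrypted access keys $\ESK$, (ii) the Ethereum-style blockchain that executes the $\SMR$, $\AVPA$, and $\GK$ smart contracts, and (iii) the light-weight client nodes run by patients, staff members, and the key-issuer. The overall line is that no centralized server exists whose saturation could deny service, and any flooding attempt against the distributed substrates is throttled by either P2P replication or by mandatory gas payments.

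For IPFS, I would invoke the fact already used in Theorem~\ref{theorem:ipfs}, namely that each segment is content-addressed by $\hash(\R_i)$ and retrieved from whichever nodes advertise it in the DHT; an attacker cannot suppress a segment without taking down every replica that holds it, forcing the adversary to identify and flood an unbounded number of peers simultaneously. For the blockchain layer, I would argue that every interaction with $\SMR$, $\AVPA$, or $\GK$ is a transaction that must be signed and must carry a non-zero gas payment, so flooding the EVM with spurious $\verifyRequest$ calls linearly drains the attacker's balance. Moreover, $\addStaffMember$ and $\addkey$ explicitly reject any $\msgsender \not\in \setOfCertifiers$ or $\not\in \setOfIssuers$ respectively, and $\verifyRequest$ itself aborts at line~5 whenever the recovered $\signer$ does not match $\hash(\pub)$ of a registered staff member, so the bulk of malicious calls terminate cheaply and still incur a gas charge.

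The hard part will be giving a quantitative cost lower bound on a sustained DDoS against the $\AVPA$ contract, since an attacker could in principle submit syntactically well-formed but semantically useless signatures that still cost the network gas to reject. I plan to handle this by observing that the per-block gas limit caps the rate of such requests a single attacker can inject, and that each failed invocation still charges the attacker at the prevailing gas price, yielding a linear relationship between attack volume and ETH spent; so long as the network sustains a non-trivial minimum gas price, a protracted attack becomes economically infeasible. Combining this with the inherent redundancy of IPFS storage and the absence of any centralized intermediary in the record-sharing workflow, the proof will conclude that the scheme exposes no single point of failure that a DDoS adversary can exploit.
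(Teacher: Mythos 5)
Your proposal is correct in substance but broader than the paper's own argument, which confines itself entirely to the IPFS storage layer. The paper argues that (a) an attacker must first identify from the DHT \emph{all} nodes replicating the target data and flood each one beyond its configured $\maxtran$ threshold (or eclipse all of their connections), which becomes infeasible as replication grows, and (b) --- this is the point you omit --- the attack is constrained by a timing window: the adversary learns which record is being requested only when the request appears on the blockchain, and must complete the entire identify-and-flood attack in time $t_{attack} < t_{scheme}$, where $t_{scheme}$ is the interval until the staff member fetches the data; with no prior knowledge of which record will be requested, the attacker cannot pre-position the flood. Your decomposition into three targets (IPFS, the contract layer, and the client nodes) is a genuinely different and arguably more complete framing: your gas-cost argument against spamming $\verifyRequest$, and the observation that $\addStaffMember$ and $\addkey$ reject unauthorized senders cheaply, are points the paper does make, but only in its separate ``Recommended Security Practices'' discussion rather than inside this proof, so folding them into the theorem is a legitimate strengthening. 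What your route buys is coverage of the smart-contract attack surface and an explicit economic lower bound; what the paper's route buys is the $t_{attack} < t_{scheme}$ timing argument, which is the one concrete, scheme-specific reason the IPFS flood fails even against a well-resourced adversary. If you keep your structure, you should incorporate that timing bound into your IPFS branch, since replication alone does not rule out an attacker who patiently enumerates the DHT in advance for a record it already suspects will be requested.
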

\begin{proof}
DDoS attacks over IPFS aim at congesting the network by sending numerous requests to the nodes storing the requested data.
However, to perform such attacks, the attackers must initially identify all the nodes storing the target data from the DHT.
Next, they must disrupt the service by sending these nodes a large number of requests, such that
\begin{equation}
    \text{Number of requests} \gg \maxtran,
\end{equation}
where $\maxtran$ is the maximum number of transactions that are accepted by a node as defined in its configuration file.

Attackers may also try to isolate and monopolizes all inbound and outbound connections and data flows of the storage nodes (referred to as an Eclipse attack~\cite{singh2006eclipse}) from the entire network.
These attacks become infeasible as the number of nodes storing the data increases since the record is replicated across the network nodes.
In addition, such attacks are limited by the time $t_{scheme}$ for a request to appear over the blockchain until the staff member fetches the data from the IPFS network nodes.
Attackers must be able to perform the entire attack in time $t_{attack}$. That is 
\begin{equation}
    t_{attack} < t_{scheme}.
\end{equation}
An attacker with no prior knowledge of which data will be requested by staff members has no advantage of identifying the nodes storing the data.
Therefore, as the value of $t_{scheme}$ becomes smaller and with an increased record replication, DDoS attacks become infeasible.
\end{proof}

\subsection{Recommended Security Practices}
Security may also be enhanced by adopting good development practices.
In the following, we present some highly recommended techniques that developers should keep in the back of their minds while implementing the system.

\para{Global smart contracts access control}
It is necessary to ensure integrity of the data stored in the $\SMR$ and $\GK$ global contracts since they provide the necessary information required to verify $\AVPA$ contracts and grant keys to the staff members.
Therefore, it is imperative to halt the attackers from manipulating the state of these contracts with fraudulent information.
Attackers with access rights that can modify the $\SMR$ contract may easily register themselves with any set of attributes required to pass the verification of certain $\AVPA$ contracts.
It is also feasible to perform DDoS attacks by deleting or modifying the registration of staff members in the $\SMR$ contract or spamming the logs of the $\GK$ contract with fraudulent information.
This will result in interrupting or slowing down the process of retrieving patient records.
Therefore, when developing such contracts, it is important to limit the ability to modify the state of global contracts to only certified institutions.
This is outlined in lines~7 and~3 of Algorithms~\ref{sc1} and~\ref{sc3}, respectively.
The current smart contract programming languages, for example, Solidity~\cite{solidity}, provide specific functions to verify certain conditions and/or variable values before execution is performed.
Well known examples include the $\mathsf{require()}$ or $\mathsf{assert()}$ functions that verify preliminary conditions and consume a portion or all of the gas associated with a transaction.
In general, two main reasons for charging users gas to trigger contracts deployed over the blockchain are to reward the executing network nodes and make potential attacks expensive.
This means that attackers must pay for each transaction they request for it to be executed by the network nodes.
This approach will not prevent attackers from registering themselves into the $\SMR$ with a set of fraudulent attributes but could help halt those trying to spam the network.
Assuming an attacker has been able to modify the state of a global contract, the attacker will still have to pay for each requested transaction.
To increase the security measures in this scenario, contracts may be designed to require a minimum gas amount in order to execute them.
However, this countermeasure results in a trade-off between security and cost for the honest users.

\para{External smart contract calls}
Our proposed scheme relies on external smart contract calls that fetch the stored attributes of registered staff members to compare them to those sent by the requesting staff members before validating if they satisfy a certain access policy.
External calls can introduce several unexpected risks or errors if the smart contracts mistakenly execute malicious code.
Therefore, it is important that patients cautiously implement their $\AVPA$ contracts to handle errors when externally calling the $\SMR$ global contracts during the initial verification process.
In Solidity~\cite{solidity}, external calls can be performed in two ways: low-level calls and contract calls.
Low-level calls do not throw exceptions, instead they return false if the external call of another smart contract itself encounters an exception.
Patients that use low-level calls must check if the returned value within the $\AVPA$ contract fails and then properly handle it.
On the other hand, contract calls are more direct as they throw exceptions as encountered by the external call.
From a security standpoint, it may be more secure to use contract calls especially when the patient is less experienced in handling complex scenarios.

\para{Non-Public Blockchains}
The security of our proposed scheme could also be enhanced by running the smart contracts over a consortium or private blockchain.
In such a setting, miners are selected nodes within the peer-to-peer network that are known and trusted.
If any of these nodes attempt to act maliciously, they are immediately identified and eliminated.
A good example of such candidate nodes could be the medical institutions themselves that are interested in maintaining the system in return for efficient data access when required.

\subsection{Privacy Analysis}
While our proposed scheme aims at satisfying the privacy desires of patients and their records, it tends to leak knowledge about the staff members and their access patterns.
Each staff member must initially become registered and is linked to a unique address before engaging with the system.
Since this information can be leaked, attackers can simply monitor requests triggered by the staff members by observing the blockchain activity.
However, since no information is revealed about the records of patients through the $\AVPA$ smart contracts, attackers can only track when staff members trigger requests, but not the data they have requested or the patient information.

From the perspective of patients, our proposed scheme is intentionally designed to allow them to trace-back the accesses performed by staff members to their records.
This is possible by tracing the history of $\LogAnnounce$ events fired as outlined in line~12 of Algorithm~\ref{sc2}.
All fired events are permanently stored over the blockchain allowing the patients to continuously monitor how their records are being accessed.
If a patient notices irregular staff member accesses that are undesired, the patient can simply redeploy a new $\AVPA$ smart contract that defines new or more strict access policies.

\begin{theorem}
Under the proposed model, the privacy of patient records location is preserved.
\end{theorem}
\begin{proof}
First, since the user record is encrypted as described in equation~\eqref{eq:asymenc} before it is uploaded to IPFS, the content of the user record is protected. 
Second, for the current IPFS system, any user in possession of the data can reproduce its cryptographic-hash, search its corresponding location that is maintained by the DHT, and locate it within the peer-to-peer network nodes determined by the default IPFS partitioning techniques.
This will also result in recursively locating any data linked to it. 
To ensure data privacy, we will append a random value $r$ to the record before uploading it to IPFS. 
Therefore, the storage location is determined by
\begin{equation}
    h = \hash(\calR \| r).
\end{equation}
The randomness of $r$ makes it infeasible for the attacker to locate the data stored over IPFS nodes. 
\end{proof}

\section{Performance and Application Analysis} \label{sec:evaluation}
In this section, we will evaluate the proposed smart contracts in terms of performance and monetary costs. 

\subsection{Performance Analysis}
The performance measurement of smart contracts can be performed either through (i) calculating the gas costs required to deploy/transact with the contract\add{,} or (ii) measuring the computational complexity of the algorithms.
Method (i) is usually more accurate since it presents an estimate of the actual monetary costs paid by the users in order to deploy/transact with the contract.
However, this method is directly dependent on the actual source code implementation.
That being said, developers that consider optimization techniques when coding their contracts can probably end up reducing the gas costs.
In addition to this, even if we were to assume access to the entire source code of smart contracts, it is still difficult to determine the expected gas costs, since contracts may behave differently based on their state. 
On the other hand, method (ii) can help give us an indication of the computational complexity based on the number of inputs.
This computational complexity directly correlates to the gas costs paid by the users.
In other words, the more complex an algorithm is, the higher gas costs it will require to be deployed/executed.
However, with smart contracts, the concept of reducing computational complexity to make it run faster is not of significant importance since blockchain transactions are generally executed at discrete intervals.
Therefore, if we were to reduce the computational complexity, it would only be to reduce the gas costs required rather than the total latency.

To measure the performance of our proposed smart contracts, we will present discussions on each of our proposed algorithms that combine both methods.
Our discussions assume that smart contracts will be implemented in Solidity~\cite{solidity}, a contract-oriented, high-level language for implementing smart contracts deployable over the Ethereuem blockchain and processed by the EVM.

\para{$\SMR$ contract}
As outlined in Algorithm~\ref{sc1}, the $\addStaffMember$ function takes $\attributes$ as input to register a new staff member. 
Due to the computational limitations of the EVM, returning dynamic content from external function calls is not possible, i.e. returning a dynamically sized array of attributes resulting from the $\getAttributes$ function when called by the $\AVPA$ contract.
This means, the size of $\Map[\address].\staffAttributes$ must be fixed in order to be executed by the EVM.
As a result, the only workaround is to use statically-sized $\upBound$ arrays of attributes as input to the $\addStaffMember$.
In this case, performance is based on the size of $\upBound$, such that
\begin{equation}
    \text{Performance} \propto \upBound.
\end{equation}

\para{$\AVPA$ contract}
As discussed previously in Agorithm~\ref{sc2}, the $\AVPA$ contract consists of two sequential verifications that play a dominant role in the performance of the contract.
In the initial verification, an input digital signature $\sigma$ is validated to prove the identity of the requesting staff member.
With Solidity, the only available cryptographic function that can perform such a process is the $\mathsf{ECDSA}$ $\ecrecover$ function.
The function recovers the Ethereum address associated with the public key from the elliptic curve signature or returns zero on error.
At the time of writing, the $\ecrecover$ function requires 3000 gas units.
In comparison to most of the available possible computations available by the EVM~\cite{wood2014ethereum}, the $\ecrecover$ function is considered to be relatively expensive.
However, this initial verification is fixed with each $\AVPA$ contract transaction.

Assuming the initial verification is successful, the $\AVPA$ externally fetches the attributes of the staff member to test them against the incorporated access structure.
Similar to the $\SMR$ contract, the performance of the $\AVPA$ contract is dependent on the $\upBound$ of the $\fetched$ attributes.
Finally, the $\fetched$ attributes are tested against the access policies starting at the highest level within the hierarchy.
Here, performance is affected by the complexity of the overall access structure defined and the number of levels $k$ within the hierarchy.
Assuming the worst case scenario, a requesting staff member might have his/her attributes tested against all access policies within the access structure and only receive the least sensitive part $\R_k$ of the record or nothing at all.
The computational complexity can be represented as $\calO(k \times \upBound \times |\calT_i|)$, where $|\calT_i|$ is the number of attributes that form the access policy at level $\calL_i$.
To reduce this complexity, we can modify the $\AVPA$ contract such that the staff members can request that their attributes be tested against only specific access policies rather than being tested sequentially against all policies.
This would reduce the computational complexity to $\calO(\upBound \times |\calT_i|)$.
We can also incorporate optimized search functions, for example, binary search, that can help optimize searching for attributes in the $\fetched$ set against those in the access policies.
This means that we can further reduce the computational complexity to $\calO(\upBound \times \log{|\calT_i|})$.
However, it is important to note that in such a scenario, the patients need to make their access structures publicly available in order for the staff members to request certain access policies.
This results in a trade-off between the performance of the $\AVPA$ contract and the privacy of the access policy defined.

\para{$\GK$ contract}
The $\GK$ contract requires the least computational complexity and gas costs in respect to the $\SMR$ and $\AVPA$ smart contracts.
The computations involved are as simple as firing a $\LogKeys$ event when access permissions are granted to staff members.
The gas costs of such operations are minimal in comparison to the other computations in the $\SMR$ and $\AVPA$ smart contracts.

\subsection{Numerical Results}

In this section, we implement, simulate and present the estimated costs required to deploy/transact with the smart contracts of our proposed scheme in multiple scenarios.
Our experiments are performed over the Ethereum testnet blockchain~\cite{ropsten}.
We specifically choose the Ethereum blockchain given the fact that it is by far the most widely used smart contract hosting blockchain.
However, we note that our proposed scheme can also be implemented over any other blockchain that incorporates smart contracts.

\begin{table}[t!]
    \centering
    \caption{Estimated smart contract deployment with $k=5$ and $N=25$}
    \begin{tabular}{|c|c|c|c|c|}
        \hline 
        \multirow{2}{*}{\textbf{Smart Contract}} & 
        \multicolumn{4}{c|}{\textbf{k=5, N=25}} \\
        \cline{2-5}
        & \textbf{Gas Limit} & \textbf{Cost/ETH} & \textbf{Cost/USD}& \textbf{Data/bytes} \\
        \hline
        \textbf{$\SMR$} & 240383 & 0.004809 & 1.4 & 736\\
        \hline
        \textbf{$\AVPA$} & 857419 & 0.017148 & 5.21 & 3323\\
        \hline
        \textbf{$\GK$} & 125617 & 0.002512 & 0.74 & 304\\
        \hline
    \end{tabular}
    \label{tab:deployment1}
\end{table}

\begin{table}[t!]
    \centering
    \caption{Estimated smart contract deployment with $k=5$ and $N=50$}
    \begin{tabular}{|c|c|c|c|c|}
        \hline 
        \multirow{2}{*}{\textbf{Smart Contract}} & 
        \multicolumn{4}{c|}{\textbf{k=5, N=50}} \\
        \cline{2-5}
        & \textbf{Gas Limit} & \textbf{Cost/ETH} & \textbf{Cost/USD}& \textbf{Data/bytes} \\
        \hline
        \textbf{$\SMR$} & 240447 & 0.004809 & 1.42 & 736\\
        \hline
        \textbf{$\AVPA$} & 1132628 & 0.022653 & 6.67 & 4536\\
        \hline
        \textbf{$\GK$} & 125617 & 0.002512 & 0.74 & 304\\
        \hline
    \end{tabular}
    \label{tab:deployment2}
\end{table}

\begin{table}[t!]
    \centering
    \caption{Estimated smart contract deployment with $k=10$ and $N=100$}
    \begin{tabular}{|c|c|c|c|c|}
        \hline 
        \multirow{2}{*}{\textbf{Smart Contract}} & 
        \multicolumn{4}{c|}{\textbf{k=10, N=100}} \\
        \cline{2-5}
        & \textbf{Gas Limit} & \textbf{Cost/ETH} & \textbf{Cost/USD}& \textbf{Data/bytes} \\
        \hline
        \textbf{$\SMR$} & 240447 & 0.004809 & 1.42 & 736\\
        \hline
        \textbf{$\AVPA$} & 1303607 & 0.026072 & 7.56 & 5188\\
        \hline
        \textbf{$\GK$} & 125617 & 0.002512 & 0.74 & 304\\
        \hline
    \end{tabular}
    \label{tab:deployment3}
\end{table}

In our simulations, we favor privacy over performance, hence, our implemented $\AVPA$ smart contracts sequentially check whether the attributes of the requesting users satisfy the access policies in order starting from the highest level of the hierarchy.
Given these circumstances, we note that our presented results can be further enhanced in terms of cost as discussed previously.
However, we intentionally choose to implement our smart contracts this way to show that even with these conditions, our proposed contracts are still cheaper when compared to the current systems used by the record-generating institutions to share medical records.
In contrast to the current record-sharing systems, our proposed scheme eliminates the role of the record-generating institution completely. 
This results in eliminating other overhead costs required when sharing medical records.
For example, in developed countries such as the U.S., record-generating institutions must comply with certain state laws that enforce a maximum fee a record-generating institution may charge when requested to share its records.
Table~\ref{tab:state_fees} illustrates a sample of these fees that could be inclusive or exclusive to the methods of delivering the record itself.
Given the current competitiveness, in most cases, the medical institutions would charge the entire allowed fees to maximize their profits.

For our simulations, we apply the values $k = \{ 5,10 \}$, $\upBound = \{ 10,20,30,40,50 \}$ and $N = \{ 25,50,100\}$, where $k$ is the number of levels in the hierarchy and $N$ is the total number of attributes used in the entire access structure.
Without loss of generality, we also assume that the number of attributes for each level follows the uniform distribution.
For example, if $k=5$ and $N=50$, then the number of attributes used to define an access policy is $|\calT_i| = 10$.
The following experiments have been conducted through an Ethereum node running on a system with 1.8 GHz Intel Core i5 and 8 GB RAM.
Our numerical results are also the averages of 10 trials under each scenario.

Based on our experiments, the costs of deploying our smart contracts are not affected by the value of $\upBound$ since there is no major change in code as this value changes.
Therefore, for all values of $\upBound$ that we tested, the costs remained constant.
Tables~\ref{tab:deployment1},\ref{tab:deployment2}, and \ref{tab:deployment3} summarize these costs in terms of gas limits and their estimated and equivalent costs in ETH and United States Dollar (USD) at the time of testing.
To measure the optimum gas limit for each smart contract deployment, we used the JSON-RPC method, $\mathsf{eth\_estimateGas}$~\cite{gaslimit}, that generates and returns an estimate of the required gas limit based on the network success rate.
We also set the gas price to 20 GWEI, where 1 ETH = $1 \times 10^9$ GWEI.
We intentionally select this value relatively higher than the average gas prices specified in~\cite{etherscan} for guaranteed and fast processing.
Again, we note that our presented results can be further optimized by selecting reduced gas price values.
As demonstrated in Tables~\ref{tab:deployment1}, \ref{tab:deployment2}, and \ref{tab:deployment3}, the estimated costs for deploying the $\SMR$ and $\GK$ smart contracts remain constant as we alter the values of $k$ and $N$.
However, the costs for deploying the $\AVPA$ smart contracts increase with an increase in the values $k$ and/or $N$.

\setlength{\tabcolsep}{1pt}
\begin{table}[t]
{
\centering
\caption{U.S maximum state fees~\cite{STATE-BY-STATE-Fees} to copy and deliver records upon being requested by others.}
\label{tab:state_fees}
 \begin{tabular}{|m{1in}|m{1in}|m{1.5in}|m{2in}|} \hline
 \textbf{State} & \textbf{Search fee} & \textbf{Cost per page} & \textbf{Misc. costs per page}\\ \hline
 \textbf{California}  & N/A & \$0.25 & Microfilm: \$0.50\\ \hline
\textbf{Texas} & \$82.95 & \shortstack[l]{P.1-10: \$45.79 flat fee\\ P.11-60: \$1.54\\ P.61-400: \$0.76\\ P.401+: \$0.41} & \shortstack[l]{Microfilm:\\ P.1-10: \$69.74 flat fee\\ P.11+: \$1.54}\\ \hline  
 \textbf{Florida} & \$1.00/year & \$1.00 & Microfilm: \$2.00 \\ \hline  
 \textbf{New York} & N/A & \$0.75 & X-rays: Actual cost of reproduction \\ \hline
 \textbf{Illinois} & \$27.91  & \shortstack[l]{P.1-25: \$1.05\\ P.26-50: \$0.70\\ P.50+: \$0.35} & Microfilm: \$1.74\\ \hline
\end{tabular}
}
\end{table} 

\setlength{\tabcolsep}{4.5pt}
\begin{table}[t] 
\centering
\caption{Estimated $\addStaffMember$ function costs} 
\label{tab:addStaff}
\begin{tabular}{|c|c|c|c|c|c|}
\hline 
 & 
\multicolumn{5}{c|}{\textbf{upBound}}\\
\cline{2-6}
& \textbf{10} & \textbf{20} & \textbf{30} & \textbf{40} & \textbf{50}\\
\hline
\textbf{Gas Limit} & 246531 & 449890 & 653249 & 856674 & 1060034 \\
\hline
\textbf{Cost/ETH} & 0.004931 & 0.008998 & 0.013065 & 0.017133 & 0.021201 \\
\hline
\textbf{Cost/USD} & 1.45 & 2.65 & 3.85 & 5.06 & 6.29\\
\hline
\end{tabular}
\end{table}

\setlength{\tabcolsep}{4.5pt}
\begin{table}[t!] 
\centering
\caption{Estimated $\addkey$ function costs} 
\label{tab:addKey}
\begin{tabular}{|c|c|c|c|c|c|}
\hline 
 & 
\multicolumn{5}{c|}{\textbf{upBound}}\\
\cline{2-6}
& \textbf{10} & \textbf{20} & \textbf{30} & \textbf{40} & \textbf{50}\\
\hline
\textbf{Gas Limit} & 26379 & 26380 & 26375 & 26379 & 26378 \\
\hline
\textbf{Cost/ETH} & 0.000528 & 0.000528 & 0.00053 & 0.000524 & 0.000528 \\
\hline
\textbf{Cost/USD} & 0.14 & 0.14 & 0.14 & 0.15 & 0.15\\
\hline
\end{tabular}
\end{table}

\begin{figure}[t]
\centering
\includegraphics[width=.9\columnwidth]{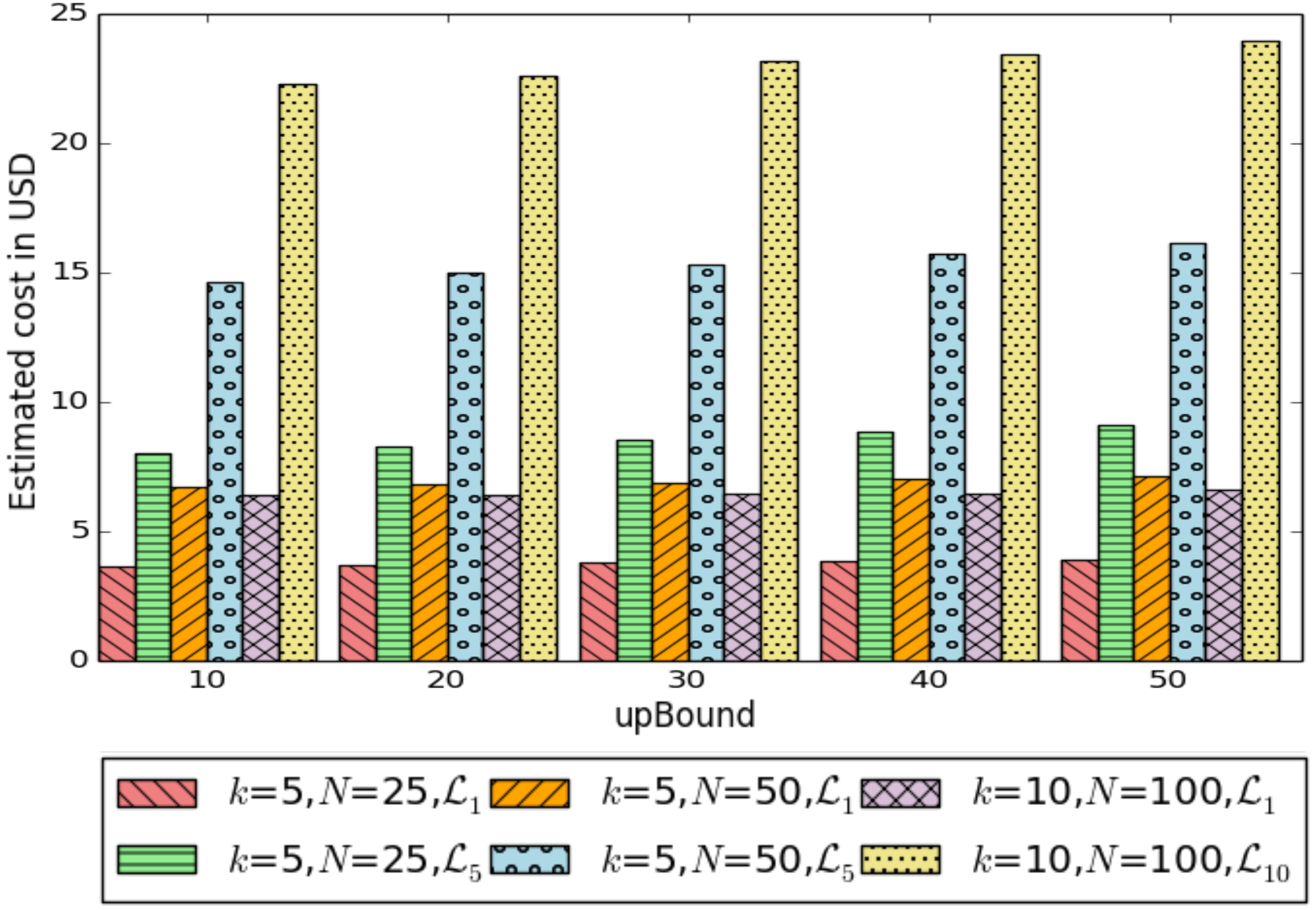}
\caption{Estimated costs to run the $\AVPA$ smart contract}
\label{Fig:avpa_transaction}
\end{figure}
Figure~\ref{Fig:avpa_transaction} demonstrates the changes in USD costs of transacting with already deployed $\AVPA$ contracts as we alter the values $k$ and $N$ for permissioned staff members at levels $\calL_1$, $\calL_5$ and $\calL_{10}$.
As shown by the figure, we recognize an increase in costs as these values increase.
We also realize that as the $\upBound$ value increases while keeping the values $k$, $N$, and $\calL_i$ constant, the costs slightly increase.

Finally, in Tables~\ref{tab:addStaff} and~\ref{tab:addKey}, we present the gas limits and costs in both ETH and USD to transact with the $\addStaffMember$ and $\addkey$ functions.
As shown in Table~\ref{tab:addStaff}, the costs of the transactions increase as the value $\upBound$ increases.
On the contrary, as presented in Table~\ref{tab:addKey}, the costs of transacting with the $\addkey$ function remains constant regardless of the $\upBound$ value used.

\subsection{Extended Application Discussions}
Aggregated patient data has the capability of transforming the future standard of care for patients through the application of predictive analytics and big data methods.
One of the biggest challenges to using health data to its fullest extent is the inaccessibility and segmentation of EMRs~\cite{white2014review}.
As demonstrated by our analyses, the proposed scheme can eliminate these constraints.
It allows healthcare providers the ability to efficiently extract knowledge from disconnected EMR sources that have been willingly shared.

For example, a meaningful opportunity for big data analytics in this context is the capability to model drug and treatment efficacy.
Healthcare providers can access previously shared EMRs to understand how life-saving drugs and treatments perform, respective to the disease state and profile of the patient.
This data can be used to make enhanced and customized treatment decisions for patients.
As healthcare treatment becomes more individualized, successful patient outcomes are more likely and a one-size-fits-all approach to patient treatment is no longer adequate.

\section{Conclusion} \label{sec:conclusion}
In this paper, we proposed $d$-MABE, a secure, privacy-preserving and distributed data sharing scheme that runs over a blockchain using smart contracts.
We demonstrated that $d$-MABE can be used in cases such as medical record-sharing where patients require an efficient and selective method to share their records with staff members of medical institutions.
Our $d$-MABE proposed scheme eliminates the reliance on the record-generating institutions when data is shared and completely empowers the patients.
Our security and privacy analyses show that $d$-MABE is secure and preserves the privacy of patient records.
We also presented some recommended security practices developers should keep in mind when developing their system.
Finally, our comprehensive evaluation proves the efficiency of $d$-MABE and presents numerical results that support our performance analysis.

\bibliographystyle{ieeetr}
\bibliography{smartcontracts}

\end{document}